\newcommand{\reach}{\Longrightarrow}
\titlespacing{\paragraph}{%
  0pt}{
  0.1\baselineskip}{
  1em}
\titlespacing\section{0pt}{8pt plus 1pt minus 1pt}{2pt plus 1pt minus 1pt}
\titlespacing\subsection{0pt}{8pt plus 1pt minus 1pt}{2pt plus 1pt minus 1pt}
\titlespacing\subsubsection{0pt}{8pt plus 1pt minus 1pt}{2pt plus 1pt minus 1pt}
\newtheoremstyle{slplain}
  {.4\baselineskip\@plus.1\baselineskip\@minus.1\baselineskip}
  {.3\baselineskip\@plus.1\baselineskip\@minus.1\baselineskip}
  {\itshape}
  {}
  {\bfseries}
  {.}
  { }
  {}
\theoremstyle{slplain} 
\algnewcommand\algorithmicswitch{\textbf{switch}}
\algnewcommand\algorithmiccase{\textbf{case}}
\renewcommand{\paragraph}[1]{\vspace{0.15cm}\noindent {\bf #1}:}
\newtheorem*{theorem*}{Theorem}
\newtheorem{theorem}{Theorem}[section]
\newtheorem{lemma}[theorem]{Lemma}
\newtheorem*{rep@theorem}{\rep@title}
\newcommand{\newreptheorem}[2]{%
\newenvironment{rep#1}[1]{%
 \def\rep@title{#2 \ref{##1}}%
 \begin{rep@theorem}}%
 {\end{rep@theorem}}}
\theoremstyle{definition}
\theoremstyle{remark}
\numberwithin{equation}{section}
\newtheoremstyle{etplain}
  {.0\baselineskip\@plus.1\baselineskip\@minus.1\baselineskip}
  {.0\baselineskip\@plus.1\baselineskip\@minus.1\baselineskip}
  {\itshape}
  {}
  {\bfseries}
  {.}
  { }
  {}
\newcommand{\idlow}[1]{\mathord{\mathcode`\-="702D\it #1\mathcode`\-="2200}}
\newcommand{\id}[1]{\ensuremath{\idlow{#1}}}
\newcommand{\litlow}[1]{\mathord{\mathcode`\-="702D\sf #1\mathcode`\-="2200}}
\newcommand{\lit}[1]{\ensuremath{\litlow{#1}}}
\newcommand{\namedref}[2]{\hyperref[#2]{#1~\ref*{#2}}}
\newcommand{\theoremref}[1]{\namedref{Theorem}{#1}}
\newcommand{\figureref}[1]{\namedref{Figure}{#1}}
\newcommand{\figurerefb}[2]{\hyperref[#1]{Figure~\ref*{#1}#2}}
\newcommand{\lemmaref}[1]{\namedref{Lemma}{#1}}
\newcommand{\equationref}[1]{\hyperref[#1]{(\ref*{#1})}}
\renewcommand{\eqref}{\equationref}
\newcommand{\DEBUG}[1]{}
\newcommand{\EX}{\operatornamewithlimits{\mathbb{E}}}
\newcommand{\FullOrShort}{short}
  \newcommand{\fullOnly}[1]{#1}
  \newcommand{\shortOnly}[1]{}
    \newcommand{\fullOnly}[1]{}
    \newcommand{\shortOnly}[1]{#1}
\newcommand{\LM}[1]{\emph{LM}}
\begin{document}

\date{}

\title{Polylogarithmic-Time Leader Election  in Population Protocols }

\author{
 Dan Alistarh\\
  \small Microsoft Research\\
	\small daalista@microsoft.com\\
  \and
 Rati Gelashvili\footnote{Work performed in part while an intern with Microsoft Research.}\\
  \small MIT\\
	\small gelash@mit.edu
}

\maketitle
\begin{abstract}

Population protocols are networks of finite-state agents, interacting randomly, and updating their states using simple rules. 
Despite their extreme simplicity, these systems have been shown to cooperatively perform complex computational tasks, such as simulating register machines to compute standard arithmetic functions.
The election of a unique \emph{leader agent} is a key requirement in such  computational constructions. Yet, the fastest currently known population protocol for electing a leader only has \emph{linear} stabilization time, and it has recently been shown that no population protocol using a \emph{constant} number of states per node may overcome this linear bound. 

In this paper, we give the first population protocol for leader election with \emph{polylogarithmic} stabilization time, using polylogarithmic memory states per node. 
The protocol structure is quite simple: each node has an associated value, and is either a \emph{leader} (still in contention) or a \emph{minion} (following some leader). 
A leader keeps incrementing its value and ``defeats'' other leaders in one-to-one interactions, 
  and will drop from contention and become a minion if it meets a leader with higher value. 
Importantly, a leader also drops out if it meets a \emph{minion} with higher absolute value. 
While these rules are quite simple, the proof that this algorithm achieves polylogarithmic stabilization time is non-trivial. 
In particular, the argument combines careful use of concentration inequalities with anti-concentration bounds, showing that the leaders' values become spread apart as the execution progresses, which in turn implies that straggling leaders get quickly eliminated.  
We complement our analysis with empirical results, 
  showing that our protocol stabilizes extremely fast, even for large network sizes. 
\end{abstract}

\setcounter{page}{1}

\section{Introduction}
Recently, there has been significant interest in modeling and analyzing interactions arising in 
biological or bio-chemical systems through an algorithmic lens. 
Several interesting computational models have been proposed for such networks, for example, the cellular automata model~\cite{CA}, the stone-age distributed computing model~\cite{EW13}, or the population model~\cite{AADFP06}.

In particular, population protocols~\cite{AADFP06}, which are the focus of this paper, consist of a set of $n$ finite-state nodes interacting in pairs, where each interaction may update the states of both participants. The goal is to have all nodes stabilize on an output value, which represents the result of the computation, usually a predicate on the initial state of the nodes. 
The set of interactions occurring at each step is assumed to be decided by an adversarial scheduler, which is usually subject to some fairness conditions. 
The standard scheduler is the \emph{probabilistic (uniform random) scheduler}~\cite{AAE08, PVV09, DV12, Spirakis}, which picks the next pair to interact uniformly at random in each step. We adopt this probabilistic scheduler model in this paper. (Some references refer to this model as the \emph{probabilistic} population model.)
The fundamental measure of stabilization is \emph{parallel time}, defined as the number of scheduler steps until stabilization, divided by $n$.\footnote{An alternative definition is when reactions occur in parallel according to a Poisson process~\cite{PVV09, DV12}.}

The class of predicates computable by population protocols is now well-understood~\cite{AADFP06, AngluinAE2006, AngluinAER2007} to consist precisely of \emph{semilinear predicates}, i.e. predicates definable in first-order Presburger arithmetic. The first such construction was given in~\cite{AADFP06}, and later improved in terms of convergence time in~\cite{AAE08le}. 
A parallel line of research studied the computability of deterministic functions in chemical reaction networks, which are also instances of population protocols~\cite{chen2014deterministic}. 
All three constructions fundamentally rely on the election of a single initial \emph{leader} node, which co-ordinates phases of computation.
 
Reference~\cite{AADFP06} gives a simple protocol for electing a leader from a uniform population, 
based on the natural idea of having leaders eliminate each other directly through symmetry breaking. 
Unfortunately, this strategy takes at least linear parallel time in the number of nodes $n$: 
for instance, once this algorithm reaches \emph{two} surviving leaders, it will require $\Omega( n^2 )$ additional interactions for these two leaders to meet. 
Reference~\cite{AAE08le} proposes a significantly more complex protocol, conjectured to be sub-linear, and whose convergence is only studied experimentally. 
These references posit the existence of a sublinear-time population protocol for leader election as a ``pressing'' open problem in the area. In fact, the existence of a poly-logarithmic leader election protocol would imply that \emph{any semilinear predicate} is computable in poly-logarithmic time by a \emph{uniform population}~\cite{AAE08le}. 

Recently, Doty and Soloveichik~\cite{DS15} showed that $\Omega( n^2 )$ expected  interactions are \emph{necessary} for electing a leader in the classic probabilistic protocol model in which each node only has \emph{constant} number of memory states (with respect to $n$). This negative result implies that computing semilinear predicates in leader-based frameworks is subject to the same lower bound. In turn, this motivates the question of whether faster computation is possible if the amount of memory per node is allowed to be a function of $n$. 

\paragraph{Contribution} 
In this paper, we solve this problem by proposing a new population protocol for leader election, 
  which stabilizes in $O( \log^3 n )$ expected parallel time, using $O( \log^3 n)$ memory states per node. 
Our protocol, called \LM, for \emph{Leader-Minion}, roughly works as follows. 
Throughout the execution, each node is either a \emph{leader}, meaning that it can still win, or a \emph{minion}, following some leader. 
Each node state is associated to some \emph{absolute value}, which is a positive integer, 
  and with a \emph{sign}, positive if the node is still in contention, and negative if the node has become a minion. 

If two leaders meet, the one with the larger absolute value survives, and increments its value, while the other drops out, 
  becoming a minion, and adopting the other node's value, but with a negative sign. 
  (If both leaders have the same value, they both increment it and continue.) 
If a leader meets a minion with \emph{smaller} absolute value than its own, it increments its value, 
  while the minion simply adopts the leader's value, but keeps the negative sign. 
Conversely, if a leader meets a minion with \emph{larger} absolute value than its own, 
  then the leader drops out of contention, adopting the minion's value, with negative sign. 
Finally, if two minions meet, they update their values to the maximum absolute value between them, but with a \emph{negative} sign. 

These rules ensure that, eventually, a single leader survives. While the protocol is relatively simple, the proof of poly-logarithmic time stabilization is non-trivial. 
In particular, the efficiency of the algorithm hinges on the minion mechanism, which ensures that a leader with high absolute value can eliminate other contenders in the system,  without having to directly interact with them. 

Roughly, the argument is based on two technical insights. 
First, consider two leaders at a given time $T$, 
  whose (positive) values are at least $\Theta( \log n )$ apart. 
Then, we show that, within $O( \log n )$ parallel time from $T$, 
  the node holding the smaller value has become a minion, with constant probability. 
Intuitively, this holds since 
  1) this node will probably meet either the other leader or one of its minions within this time interval, and 
  2) it cannot increase its count fast enough to avoid defeat.  
For the second part of the argument, we show via anti-concentration that, 
  after parallel time $\Theta( \log^2 n )$ in the execution, 
  the values corresponding to an arbitrary pair of nodes will be separated by at least $\Omega( \log n )$.  

We ensure that the values of nodes cannot grow beyond a certain threshold, 
  and set the threshold in such a way that the total number of states is $\Theta(\log^3 n)$.
We show that with high probability the leader will be elected before the values of the nodes reach the threshold.
In the other case, remaining leaders with threshold values engage in a \emph{backup} dynamics where minions are irrelevant 
  and leaders defeat each other when they meet based on random binary indicators 
  which are set using the randomness of the scheduler.
This process is slower but correct and only happens with very low probability, allowing to conclude that 
 the \LM{} algorithm stabilizes to a single leader 
  within $O( \log^3 n )$ parallel time, both with high probability and in expectation,
  using $O( \log^3 n )$ states.

In population protocols, in every interaction, 
  one node is said to be the \emph{initiator}, 
  the other is the \emph{responder},
  and the state update rules can use this distinction.
In our protocol, this would allow a leader (the initiator in the interaction) 
  to defeat another leader with the same value (the responder),
  and could also simplify the backup dynamics of our algorithm. 
However, our \LM{} algorithm has the nice property that the state update rules can be made completely symmetric
  with regards to the initiator and responder roles.\footnote{For this reason, \LM{} algorithm works for $n>2$ nodes, because to elect a leader among two nodes it is necessary to rely on the initiator-responder role distinction.} 
     
Summing up, we give the first poly-logarithmic time protocol for electing a leader from a uniform population. 
We note that $\Omega( n \log n )$ interactions seem intuitively necessary for leader election, as this number is required to allow each node to interact at least once. 
However, this idea fails to cover all possible reaction strategies if nodes are allowed to have arbitrarily many states. 

 We complement our analysis with empirical data, suggesting that the stabilization time of our protocol is close to logarithmic, and that in fact the asymptotic constants are small, both in the stabilization bound, and in the upper bound on the number of states the protocol employs.

\paragraph{Related Work}
We restrict our attention to work in the population model. 
The framework of population protocols was formally introduced in reference~\cite{AADFP06}, to model interactions arising in biological, chemical, or  sensor networks. It sparked research into its computational power~\cite{AADFP06, AngluinAE2006, AngluinAER2007}, and into the time complexity of fundamental tasks such as majority~\cite{AAE08, PVV09, DV12, Spirakis, AGVsubmitted}, and leader election~\cite{AADFP06, AngluinAE2006}.\footnote{The best known upper bound for deterministic majority is of $O( \log n \log s + \log n / (\epsilon s))$ parallel time~\cite{AGVsubmitted}, where $n$ is the number of nodes, $s$ is the number of states per node, and $\epsilon$ is the initial node difference between the two input states. The two problems are complementary, and no complexity-preserving transformations exist, to our knowledge. 
}
References interested in \emph{computability} consider an adversarial scheduler which is restricted to be \emph{fair}, e.g., where each agent interacts with every other agent infinitely many times. 
For complexity bounds, the standard scheduler is \emph{uniform}, scheduling each pair uniformly at random at each step, e.g.,~\cite{AAE08, PVV09, DV12, Spirakis}. This model is also known as the \emph{probabilistic} population model.

To the best of our knowledge, no population protocol for electing a leader with sub-linear stabilization
  time was known before our work. 
References~\cite{AADFP06, AngluinAE2006, chen2014deterministic} present 
leader-based frameworks for population computations, assuming the existence of such a node. 
The existence of such a sub-linear protocol is stated as an open problem in~\cite{AADFP06, AngluinAE2006}. Reference~\cite{doty2013leaderless} proposes a \emph{leader-less} framework for population computation. 

Recent work by Doty and Soloveichik~\cite{DS15} showed an $\Omega( n^2 )$ lower bound on the number of interactions necessary for electing a leader in the classic probabilistic protocol model in which each node only has \emph{constant} number of memory states with respect to the number of nodes $n$~\cite{AngluinAER2007}. 
The proof of this result is quite complex, and makes use of the limitation 
that the number of states remains constant even as the number of nodes $n$ is taken to tend to infinity. 

Thus, our algorithm can be interpreted as a complexity separation between population protocols which may only use constant memory per node, and protocols where the number of states is allowed to be a function of $n$. 

A parallel line of research studied \emph{self-stabilizing} population protocols, e.g., ~\cite{angluin2006self, fischer2006self, sudo2010loosely}, that is, protocols which can stabilize to a correct solution from an arbitrary initial state. 
It is known that stable leader election is impossible in such systems~\cite{angluin2006self}; references~\cite{fischer2006self, sudo2010loosely}
circumvent this impossibility by relaxing the problem semantics. 
Our algorithm is not affected by this result since it is not self-stabilizing.
\section{Preliminaries}
\paragraph{Population Protocols} 
We assume a population consisting of $n$ agents, or nodes, each executing as a deterministic state machine with states from a finite set $Q$, with a finite set of input symbols $X \subseteq Q$, a finite set of output symbols $Y$, a transition function $\delta : Q \times Q \rightarrow Q \times Q$, and an output function $\gamma : Q \rightarrow Y$. 
Initially, each agent starts with an input from the set $X$, and proceeds to update its state following interactions with other agents, according to the transition function $\delta$. 

Agents are \emph{anonymous}, 
  so any two agents in the same state are identical and interchangeable. 
Thus, we represent any set of agents simply 
  by the \emph{counts of agents} in every state, which we call a \emph{configuration}.
More formally, a \emph{configuration} $c$ is a function 
  $c: Q \to \mathbb{N}$, where $c(S)$ represents the 
  \emph{number of agents in state $S$ in configuration $c$}.
We let $|c|$ stand for the sum, over all states $S \in Q$, of $c(S)$,
  which is the same as the total number of agents in configuration $c$.
For instance, if $c$ is a configuration of all agents in the system,
  then $c$ describes the global state of the system, and $|c| = n$. 
We say that a configuration $c'$ is \emph{reachable} from a configuration $c$, 
  denoted $c \reach c'$, if there exists a sequence of consecutive steps 
  (interactions from $\delta$ between pairs of agents) 
  leading from $c$ to $c'$.

The agents' interactions proceed according to a directed \emph{interaction graph} $G$ without self-loops, whose edges indicate possible agent interactions. 
Usually, the graph $G$ is considered to be the complete graph on $n$ vertices, a convention we also adopt in this paper. 

The execution proceeds in \emph{steps}, or \emph{rounds}, where in each step a new edge $(u, w)$ is chosen uniformly at random from the set of edges of $G$. Each of the two chosen agents updates its state according to function $\delta$. 


\paragraph{Parallel Time} The above setup considers sequential interactions; however, in general, interactions between pairs of distinct agents are independent, and are usually considered as occurring in parallel. In particular, it is customary to define one unit of \emph{parallel time} as $n$ consecutive steps of the protocol. 

\paragraph{The Leader Election Problem} 
In the \emph{leader election} problem, all agents start in the same initial state $A$, 
  i.e. the only state in the input set $X = \{A\}$. 
The output set is $Y = \{\id{Win}, \id{Lose}\}$. 

We say that a configuration $c$ \emph{has a single leader} if 
  there exists some state $S \in Q$ with $\gamma(S) = \id{Win}$ and $c(S) = 1$, 
  such that for any other state $S' \neq S$, $c(S') > 0$ implies $\gamma(S') = \id{Lose}$.
A configuration $c$ of $n$ agents has a \emph{stable leader},
  if for all $c'$ reachable from $c$, it holds that $c'$ has a single leader.

A population protocol \emph{stably elects a leader}
  within $\ell$ steps with probability $1 - \phi$, 
  if, with probability $1 - \phi$, any configuration $c$ reachable 
  by the protocol after $\geq \ell$ steps has a stable leader.


\section{The Leader Election Algorithm}
In this section, we describe the \LM{} leader election algorithm. 
The algorithm has an integer parameter $m > 0$, which we set to $\Theta(\log^3 n)$. 
Each state corresponds to an integer value from the set 
  $\{-m, -m+1, \ldots, -2, -1, 1, 2, m-1, m, m+1\}$. 
Respectively, there are $2m+1$ different states.
We will refer to states and values interchangeably.
All nodes start in the same state corresponding to value $1$.

The algorithm, specified in~\figureref{fig:lepp}, 
  consists of a set of simple deterministic update rules for the node state. 
In the pseudocode, the node states before an interaction are denoted by $x$ and $y$, 
  while their new states are given by $x'$ and $y'$.
All nodes start with value $1$ and continue to interact according to these simple rules. 
We prove that all nodes except one will stabilize to negative values, 
  and that stabilization is fast with high probability. 
This solves the leader election problem since 
  we can define $\gamma$ as mapping only positive states to $\id{Win}$ (a leader). 
  (Alternatively, $\gamma$ that maps only two states with values $m$ and $m+1$ to $\id{WIN}$ would also work, 
  but we will work with positive leader states for the simplicity of presentation.)

Since positive states translate to being a leader according to $\gamma$,
  we call a node a \emph{contender} if it has a positive value, and a \emph{minion} otherwise.
We present the algorithm in detail below.
\begin{figure}[ht]
\hrule
\DontPrintSemicolon
{\centering
{\small
\begin{algorithm}[H]
\SetKwInput{KwState}{Parameters}
\KwState{\;
$m,$ an integer $>0$, set to $\Theta(\log^3{n})$\;
}
\SetKwInput{KwState}{State Space}
\KwState{\;
$\id{Leader States} = \{ 1, 2, \ldots, m - 1, m, m + 1 \}$, \;
$\id{Minion States} = \{ -1, -2 , \ldots, -m + 1, -m \}$,\;
}

\KwIn{States of two nodes, $x$ and $y$}
\KwOut{Updated states $x'$ and $y'$}

\SetKwInput{KwState}{Auxiliary Procedures}

\KwState{\;
$\id{is-contender}( x ) = \left\{ 
 \begin{array}{ll} 
  \lit{true}  & \textnormal{if $x \in \id{Leader States}$; } \\
  \lit{false} & \textnormal{otherwise.}
  \end{array} 
  \right. $

$\id{contend-priority}( x, y ) = \left\{ 
 \begin{array}{ll} 
  m                  & \textnormal{if $\max(|x|, |y|) = m + 1$; } \\
  \max(|x|, |y|) + 1 & \textnormal{otherwise.}
  \end{array} 
  \right. $

$\id{minion-priority}( x, y ) = \left\{ 
 \begin{array}{ll} 
  -m              & \textnormal{if $\max(|x|, |y|) = m + 1$; } \\
  -\max(|x|, |y|) & \textnormal{otherwise.}
  \end{array} 
  \right. $
}

\BlankLine
\textbf{procedure} $\lit{update}\langle x, y \rangle$\;
{
\Indp

\If{$\id{is-contender}(x)$ \textbf{and} $|x| \geq |y|$ \nllabel{line:xst}}
{
  $x' \gets \id{contend-priority}(x, y)$ \nllabel{line:xm}
} \lElse {$x' \gets \id{minion-priority}(x, y)$ \nllabel{line:xen}}

\If{$\id{is-contender}(y)$ \textbf{and} $|y| \geq |x|$ \nllabel{line:yst}}
{
  $y' \gets \id{contend-priority}(x, y)$ \nllabel{line:ym}
} \lElse {$y' \gets \id{minion-priority}(x, y)$ \nllabel{line:yen}}

\Indm
}
\end{algorithm}}}
\hrule
\caption{The state update rules for the \LM{} algorithm.}
\label{fig:lepp}
\end{figure}

The state updates (i.e. the transition function $\delta$) of the \LM{} algorithm  
  are completely symmetric, that is, 
  the new state $x'$ depends on $x$ and $y$ (lines~\ref{line:xst}-\ref{line:xen}) 
  exactly as $y'$ depends on $y$ and $x$ (lines~\ref{line:yst}-\ref{line:yen}). 

If a node is a contender and has absolute value not less than the absolute value of the interaction partner,
  then the node remains a contender and updates its value using the \emph{contend-priority} function 
  (lines~\ref{line:xm} and~\ref{line:ym}).
The new value will be one larger than the previous value except when the previous value was $m+1$, 
  in which case the new value will be $m$.

If a node had a smaller absolute value than its interaction partner, or was a minion already,
  then the node will be a minion after the interaction.
It will set its value using the \emph{minion-priority} function,  
  to either $-\max(|x|, |y|)$, or $-m$ if the maximum was $m+1$ (lines~\ref{line:xen} and~\ref{line:yen}). 

Values $m+1$ and $m$ are treated exactly the same way by minions (essentially corresponding to $-m$).
These values serve as a binary tie-breaker 
  among the contenders that ever reach the value $m$, as will become clear from the analysis.
\section{Analysis}
In this section, we provide a complete analysis of our leader election algorithm. 

\paragraph{Notation} 
Throughout the proof, 
we call a node \emph{contender} when the value associated with its state is positive,
  and a \emph{minion} when the value is negative. 
As previously discussed, we assume that $n>2$. 
For presentation purposes, we also consider $n$ to be a power of two.
We measure execution time in discrete steps (rounds), where each step corresponds to an interaction. 

We first prove that the algorithm never eliminates all contenders and 
  that a configuration with a single contender means that a leader is elected.
\begin{lemma}
\label{lem:correct}
There is always at least one contender in the system.
Suppose the execution reaches a configuration $c$ with only node $v$ being a contender.
Then, $v$ remains a contender (mapped to $\id{WIN}$ by $\gamma$) 
  in any configuration $c'$ reachable from $c$,
  and $c'$ never contains another contender.
\end{lemma}
\begin{proof}
By the structure of the algorithm, 
  a node starts as a contender and may become a minion during an execution, 
  but a minion may never become a contender.
Moreover, an absolute value associated with the state of a minion node can only increase to 
  an absolute value of an interaction partner.

Suppose for contradiction that an execution reaches 
  a configuration $\hat{c}$ where all nodes are minions.
Let the maximum absolute value of the nodes be $u$ in $\hat{c}$.
Because the minions cannot increase the maximum absolute value in the system, 
  there must have been a contender with value $u$ during the execution 
  before the execution reached $\hat{c}$.
For this contender to have become a minion, 
  it must have interacted with another node with an absolute value strictly larger than $u$.
The absolute value of a node never decreases except from $m+1$ to $m$,
  and despite existence of a larger absolute value than $u$ before reaching $\hat{c}$, 
  $u$ was the largest absolute value in $\hat{c}$.
Thus, $u$ must be equal to $m$. 
But after such an interaction, the second node that was in the state $m+1$ 
  remains a contender with value $m$.
Before the execution reaching $\hat{c}$, 
  it must also have interacted with yet another node with value $m+1$
  in order to become a minion itself.
But then, the interaction partner remains a contender with value $m$ 
  and the same reasoning applies to it.
Our proof follows by infinite descent.

Consequently, whenever there is a single contender in the system, 
  it must have the largest absolute value.
Otherwise, it could interact with a node with a larger absolute value and become a minion,
  contradicting the above proof that all nodes may never be minions.
Due to this invariant, the only contender may never become a minion 
  and we know the minions can never become contenders.
\end{proof}
Now we turn our attention to the stabilization speed (assuming $n > 2$) of the \LM{} algorithm.
Our goal is bound the number of steps necessary to eliminate all except a single contender.
In order for a contender to get eliminated, 
  it must come across a larger value of another contender, 
  the value possibly conducted through a chain of multiple minions via multiple interactions.

We first show by a rumor spreading argument that if the difference between the 
  values of two contenders is large enough, then the contender with the smaller value will 
  become a minion within the next $O(n \log n)$ interactions, with constant probability.
Then we use anti-concentration bounds to establish that for any two fixed contenders, 
  given that no absolute value in the system reaches $m$, after every $O(n \log^2 n)$ 
  interactions the difference between their values is large enough with constant probability.
\begin{lemma}
\label{lem:rumor}
Consider a configuration $c$, in which there are two contenders with values $u_1$ and $u_2$, 
  where $ u_1 - u_2 \geq  4 \xi \log n$ for $\xi \geq 8$.
Then, after $\xi n \log n$ interactions from $c$, 
  the node that initially held the value $u_2$ will be a minion with probability at least $1/24$
  (independent of the history of previous interactions leading up to $c$).
\end{lemma}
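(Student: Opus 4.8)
The plan is to analyze how the value $u_2$ held by the smaller contender gets ``defeated'' by the larger value $u_1$, which spreads through the population via a rumor-spreading-style process. The key observation is that the contender holding $u_2$ will become a minion as soon as it interacts with \emph{any} node whose absolute value strictly exceeds its own current value. Since the contender holding $u_1$ starts $4\xi\log n$ ahead, and the smaller contender can only increase its value by $1$ per winning interaction, I would argue that the larger value propagates faster than the smaller contender can catch up.

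First I would set up the rumor-spreading component. Consider the value $u_1$ (or larger) as a ``rumor'' that nodes can carry: initially one node carries a value $\geq u_1$, and when such a node interacts with another node of smaller absolute value, by the minion-priority and contend-priority rules, the partner adopts an absolute value $\geq u_1$ (either $u_1$ itself or $u_1+1$, etc.). Crucially, once a node's absolute value reaches at least $u_1$, it never drops below $u_1$ (absolute values never decrease except the $m{+}1 \to m$ transition, which is irrelevant here since we are far below the threshold). So the set $S$ of nodes with absolute value $\geq u_1$ is monotonically growing, and grows exactly like a one-way epidemic: in each round, $S$ expands when an $S$-node meets a non-$S$-node. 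Standard rumor-spreading analysis shows that within $O(n\log n)$ rounds, $S$ reaches a constant fraction (say half) of all nodes with high constant probability. I would choose the constant $\xi$ so that $\xi n \log n$ rounds suffice for this phase to reach, say, $n/2$ nodes with probability at least, e.g., $1/6$.

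Second, I would track the target contender (the one starting at $u_2$) over the same $\xi n \log n$ rounds. Its value can increase by at most $1$ in each interaction it participates in, and only when it wins (i.e. has absolute value $\geq$ its partner). Over $\xi n\log n$ rounds, the expected number of interactions involving this specific node is $\xi n \log n \cdot \tfrac{2}{n} = 2\xi \log n$ (it is one of two chosen endpoints each round), so by a Chernoff bound its total number of interactions, hence its total value increase, is at most $4\xi\log n - 1$ with good constant probability. Since it started at $u_2 = u_1 - 4\xi\log n$, this keeps its value strictly below $u_1$ throughout the interval with constant probability. Then, conditioned on both good events, once the epidemic set $S$ contains $n/2$ nodes, in each subsequent round the target node meets an $S$-node (carrying value $\geq u_1 >$ its own value) with probability $\geq \tfrac{1}{2}\cdot\tfrac{2}{n}$, and one such meeting turns it into a minion. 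Arranging the time budget so that a constant fraction of the $\xi n\log n$ rounds remains after $S$ grows large, a final Chernoff/union bound gives that the target is defeated with the claimed constant probability $1/24$.

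The main obstacle is handling the interaction between the two processes cleanly, since they are not independent: the target contender could itself be absorbed into the epidemic set $S$ (which is exactly the good outcome), and conversely the target winning interactions is what both delays its defeat and is correlated with which rounds it participates in. The cleanest way to manage this, which I would adopt, is a coupling/monotonicity argument: I would lower-bound the epidemic growth by ignoring the target node entirely (treating the remaining $n-1$ nodes), so the growth of $S$ is stochastically at least that of a rumor spreading among $n-1$ nodes and is independent of the target's own coin flips except through the meetings that defeat it. The three bad events --- epidemic too slow, target's value grows too much, and target never meets $S$ in the remaining window --- are each bounded by a constant, and I would choose $\xi \geq 8$ and the intermediate constants so their union is at most $23/24$, yielding the $1/24$ bound independent of prior history (the latter following because all rules depend only on the current configuration, and the bounds above use only the value gap at time $T$).
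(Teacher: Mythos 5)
Your proposal is correct and follows essentially the same strategy as the paper's proof: treat ``absolute value $\geq u_1$'' as a monotonically spreading rumor, show it reaches the straggler within $\xi n \log n$ rounds with constant probability, and separately use a Chernoff bound on the straggler's number of interactions (expected $2\xi\log n$, so below $4\xi\log n$ w.h.p.) to show it cannot self-increment up to $u_1$ in time, then union-bound. The one place you diverge is the middle step: the paper avoids your two-phase ``epidemic reaches $n/2$, then the target gets hit in the remaining rounds'' argument (and hence the coupling needed to decouple the target from the epidemic) by a slicker exchangeability trick --- it shows via coupon-collector plus Markov that at least $n/3+1$ nodes are up-to-date after $\xi n\log n$ rounds with probability $\geq 1/4$, and then notes that by symmetry of the scheduler the infected set is uniformly distributed over the other $n-1$ identities, so the target is in it with probability $\geq 1/3$ conditionally, giving $1/12$ overall. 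Your route costs an extra phase and a coupling lemma but is equally valid; both arrive at the same $1/24$-type constant.
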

\begin{proof}
We call a node that has an absolute value of at least $u_1$ an \emph{up-to-date} node, 
  and \emph{out-of-date} otherwise.
Initially, at least one node is up-to-date.
When there are $x$ up-to-date nodes,
  the probability that an out-of-date node interacts with an up-to-date node next,
  increasing the number of up-to-date nodes to $x+1$, is $\frac{2x(n-x)}{n(n-1)}$.
By a Coupon Collector argument, 
  the expected number of steps until every node is up-to-date is
  $\sum_{x = 1}^{n-1}\frac{n (n-1)}{2x(n-x)}\leq\frac{(n-1)}{2}\sum_{x = 1}^{n-1}\left(\frac{1}{x}+\frac{1}{n-x}\right)\leq 2 n\log n$.
  
By Markov's inequality, the probability that not all nodes are up-to-date 
  after $\xi n \log n$ interactions is at most $2 / \xi$. 
Hence, expected number of up-to-date nodes after $\xi n \log n$ interactions
  is at least $\frac{ n (\xi - 2)}{\xi}$.
Let $q$ be the probability that the number of up-to-date nodes after $\xi n \log n$ interactions 
  is at least $\frac{ n }{ 3} + 1$. 
We have $qn + (1-q) (\frac{ n }{ 3} + 1) \geq \EX[Y] \geq \frac{ n (\xi - 2)}{\xi}$,
  which implies $q \geq \frac{1}{4}$ for $n>2$ and $\xi \geq 8$.

Hence, with probability at least $1/4$, at least $n/3 + 1$ are nodes are up to date after 
  $\xi n \log n$ interactions from configuration $c$.
By symmetry, the $n/3$ up-to-date nodes except the original node are uniformly 
  random among the other $n-1$ nodes. 
Therefore, any given node, in particular the node that had value $u_2$ in $c$
  has probability at least $1/4 \cdot 1/3 = 1/12$
  to be up-to-date after $\xi n \log n$ interactions.
When the node that was holding value $u_2$ in $c$ becomes up-to-date and 
  gets an absolute value of at least $u_1$ from an interaction,
  it must become a minion by the structure of the algorithm 
  if its value before this interaction was still strictly smaller than $u_1$.   
Thus, we only need to show that the probability of selecting the node that initially had value $u_2$ 
  at least $4 \xi \log n$ times (so that its value can reach $u_1$) 
  during these $\xi n \log n$ interactions is at most $1/24$.
The claim then follows by Union Bound.

In each interaction, the probability to select this node (that initially held $u_2$) is $2/n$.
Let us describe the number of times it is selected in $\xi n \log{n}$ interactions 
  by considering a random variable $Z \sim \mathrm{Bin}(\xi n \log{n}, 2/n)$. 
By Chernoff Bound, the probability being selected at least $4 \xi \log n$ times is at most:
\begin{align*}
\Pr\left[Z \geq 4 \xi \log n \right] 
  \leq \exp \left( -\frac{2\xi}{3} \log{n} \right) \leq \frac{1}{n^{2\xi/3}} \leq \frac{1}{24}
\end{align*} 
\noindent finishing the proof.
\end{proof}
Next, we show that, after $\Theta( n \log^2 n )$ interactions,
  the difference between the values of any two given contenders is high, 
  with a reasonable probability. 
\begin{lemma}
\label{lem:gap}
For an arbitrary configuration $c$, fix two conteders in $c$
  and a constant $\xi \geq 1$. 
Let $c'$ be a configuration reached after $32 \xi^2 n \log^2 n$ interactions from $c$.

If absolute values of all nodes are strictly less than $m$ at all times before reaching $c'$, then, 
  with probability at least $\frac{1}{24} - \frac{1}{n^{8\xi}}$, in $c'$,
  either at least one of the two fixed nodes have become minions,
  or their absolute values differ by at least $4 \xi \log n$. 
\end{lemma}
\begin{proof}
Suppose no absolute value reaches $m$ at any point before reaching $c'$
  and that the two fixed nodes are still contenders in $c'$.
We need to prove that the difference of values is large enough. 

Consider the $32 \xi^2 n \log^2 n$ interactions following $c$.
If an interaction involves exactly one of the two fixed nodes, we call it a \emph{spreading}.
For each interaction, probability of it being spreading is $\frac{4(n-2)}{n(n-1)}$,
  which for $n > 2$ is at least $2/n$.
So, we can describe the number of spreading interactions among the $32 \xi^2 n \log^2{n}$ steps 
  by considering a random variable $X \sim \mathrm{Bin}(32 \xi^2 n \log^2{n}, 2/n)$.
By Chernoff Bound, the probability of having no more than $32 \xi^2 \log^2{n}$ 
  spreading interactions is at most
\begin{align*}
\Pr\left[X \leq 32 \xi^2 \log^2{n} \right] 
\leq \exp\left(- \frac{64 \xi^2\log^2{n}}{2^2\cdot2} \right) 
< \frac{1}{n^{8\xi}},
\end{align*}
Let us from now on focus on the high probability event that there are 
  at least $32 \xi^2 \log^2{n}$ spreading interactions between $c$ and $c'$, 
  and prove that the desired difference will be large enough with probability $\frac{1}{24}$.
This implies the claim by Union Bound with the above event 
  (since for $n>2$, $\frac{1}{n^{8\xi}} < \frac{1}{24}$ holds). 

We assumed that both nodes remain contenders up until $c'$.
Hence, in each spreading interaction, a value of exactly one of them, 
  with probability $1/2$ each, increases by one.
Let us call the fixed nodes $V_1$ and $V_2$,
  and suppose the value of $V_1$ was not less than the value of $V_2$ in $c$.
Let us now focus on the sum $Y$ of $k$ independent uniformly distributed $\pm 1$ 
  Bernoulli trials $x_i$ with $1 \leq i \leq k$,
  where each trial corresponds to a spreading interaction and 
  outcome $+1$ means that the value of $V_1$ increased,
  while $-1$ means that the value of $V_2$ increased.
In this terminology, we are done if we show that $\Pr[ Y \geq 4 \xi \log n] \geq \frac{1}{24}$ 
  for $k \geq 32 \xi^2 \log^2{n}$ trials.

However, we have that:
\begin{align}
\Pr[ Y \geq 4 \xi \log n] &\geq \frac{\Pr[|Y| \geq 4\xi \log n]}{2} = \frac{\Pr[|Y^2| \geq 16 \xi^2 \log^2 n]}{2} \label{eq:sym}\\
 &\geq \frac{\Pr[ |Y^2| \geq k/2]}{2} = \frac{\Pr[ |Y^2| \geq \EX[Y^2]/2]}{2} \label{eq:ksdef}\\
 &\geq \frac{1}{2^2 \cdot 2} \frac{ \EX[Y^2]^2 }{ \EX[Y^4] } \geq \frac{1}{24} \label{eq:exps}
\end{align} 
\noindent where~\ref{eq:sym} follows from the symmetry of the sum with regards to the sign, 
  that is, from
  $\Pr[ Y > 4 \xi \log n] = \Pr[ Y < -4 \xi \log n]$. 
For~\ref{eq:ksdef} we have used that $k \geq 32 \xi^2 \log^2{n}$ and $\EX[Y^2] = k$ 
  (more about this below).
Finally, to get~\ref{eq:exps} we use Paley-Zygmund inequality and 
  the fact that $\EX[Y^4] = 3k(k-1) + k \leq 3k^2$.
Evaluating $\EX[Y^2]$ and $\EX[Y^4]$ is simple by using the definition of $Y$ and 
  the linearity of expectation.
The expectation of each term then is either $0$ or $1$ and it suffices 
  to count the number of terms with expectation $1$,
  which are exactly the terms where each multiplier is raised to an even power.    
\end{proof}
We are ready to prove the stabilization speed with high probability
\begin{theorem}
\label{thm:whp}
There exists a constant $\alpha$, such that for any constant $\beta \geq 3$ following holds:
If we set $m = \alpha \beta \log^3 n = \Theta(\log^3 n)$,
  the algorithm elects a leader (i.e. reaches a configuration with a single contender) 
  in at most $O(n \log^3 n)$ steps, i.e. in parallel time $O(\log^3 n)$,
  with probability at least $1 - 1/n^{\beta}$.  
\end{theorem}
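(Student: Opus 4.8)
The plan is to split the $R = \Theta(n\log^3 n)$ rounds into $k = \Theta(\beta \log n)$ consecutive \emph{super-phases}, each of length $\Theta(n \log^2 n)$, and to argue that every fixed pair of contenders is ``broken'' (at least one of the two becomes a minion) within a single super-phase with some constant probability $c > 0$. Fixing $\xi = 8$, a super-phase consists of a gap sub-phase of $32\xi^2 n\log^2 n$ rounds followed by a rumor sub-phase of $\xi n\log n$ rounds. The payoff chains the two earlier lemmas: by \lemmaref{lem:gap} (with $\xi=8$), after the gap sub-phase either one of the two nodes is already a minion, or their values differ by at least $4\xi\log n = 32\log n$, with probability at least $\frac{1}{24} - \frac{1}{n^{64}}$; and in the latter case \lemmaref{lem:rumor}, whose hypothesis $u_1 - u_2 \ge 4\xi\log n$ with $\xi\ge 8$ is exactly met, guarantees that the lower node becomes a minion during the rumor sub-phase with probability at least $\frac{1}{24}$. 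Multiplying the two constant probabilities yields a constant $c$ (roughly $1/576$) lower-bounding the chance that the pair is broken within one super-phase.

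Next I would chain these per-phase bounds. Since a contender may only turn into a minion and never revert (established in the proof of \lemmaref{lem:correct}), for any fixed pair the event that both members are still contenders is nested across super-phases. Crucially, the rumor bound holds \emph{independent of the history of previous interactions}, and the gap bound applies to ``any two contender nodes'' at the arbitrary start of a super-phase; hence the per-super-phase breaking probability is at least $c$ conditioned on the entire past (given the no-threshold condition below), so the probability that a fixed pair survives all $k$ super-phases is at most $(1-c)^k$. Choosing $k = \Theta(\beta\log n)$ large enough makes this at most $\frac{1}{2}n^{-(\beta+2)}$, and a union bound over the fewer than $n^2$ pairs shows that, except with probability $\frac{1}{2}n^{-\beta}$, no pair of contenders survives all $k$ super-phases. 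By monotonicity of contender-status, two contenders coexisting at time $R$ would constitute a surviving pair, so this event forces at most one contender at time $R$; by \lemmaref{lem:correct} there is always at least one, hence exactly one, i.e.\ a leader is elected.

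Everything above is conditioned on the event $E$ that no node's absolute value ever reaches $m$ within the $R$ rounds, which is precisely the hypothesis required by \lemmaref{lem:gap}. A node's value rises by at most one each time it is selected, and in $R = \Theta(n\log^3 n)$ rounds a fixed node is selected $\mathrm{Bin}(R, 2/n)$ times, with mean $\Theta(\log^3 n)$; a Chernoff bound shows it is selected more than $\alpha\beta\log^3 n$ times with probability at most $\frac{1}{2}n^{-(\beta+1)}$ once $\alpha$ is a sufficiently large absolute constant, and a union bound over the $n$ nodes gives $\Pr[\neg E] \le \frac{1}{2}n^{-\beta}$. This is exactly where the setting $m = \alpha\beta\log^3 n = \Theta(\log^3 n)$ enters: $m$ is chosen just large enough that values provably stay below threshold throughout the window in which convergence occurs. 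Combining the two failure contributions, $\Pr[\text{no leader after } R \text{ rounds}] \le \Pr[\neg E] + \Pr[\text{some pair survives}\wedge E] \le n^{-\beta}$, and $R = O(n\log^3 n)$ translates to parallel time $O(\log^3 n)$.

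I expect the main obstacle to be the chaining in the second paragraph: turning two isolated one-shot, constant-probability lemmas into a clean $(1-c)^k$ bound requires the per-super-phase success probability to be lower-bounded \emph{conditionally on arbitrary history and on the event $E$}. This is what the ``independent of history'' clause of \lemmaref{lem:rumor} and the ``arbitrary time $T$'' / ``any two contenders'' formulation of \lemmaref{lem:gap} are designed to supply, but the delicate bookkeeping is confirming that intersecting with $E$ (a global event about the whole window) does not destroy these per-phase conditional guarantees — which works out because \lemmaref{lem:gap} only needs the no-threshold property up to the end of each phase's gap sub-phase, a horizon that $E$ certainly covers.
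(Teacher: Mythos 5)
Your proposal is correct and follows essentially the same route as the paper's proof: partition the $\Theta(n\log^3 n)$ rounds into $\Theta(\beta\log n)$ phases each consisting of a \lemmaref{lem:gap} window followed by a \lemmaref{lem:rumor} window, lower-bound the per-phase pair-breaking probability by a constant, chain to get $(1-c)^{\Theta(\beta\log n)}$ per pair, union-bound over the $n^2$ pairs, and separately control the event that some value reaches $m$ via a Chernoff bound on interaction counts plus a union bound over nodes. The only cosmetic differences are that you fix $\xi=8$ where the paper leaves $\xi$ as a parameter, and you flag the conditioning-on-$E$ subtlety more explicitly than the paper does.
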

\begin{proof}
Let us fix $\xi \geq 8$ large enough, such that for some constant $p$
\begin{equation}
\label{eq:gammab}
\frac{1}{24} \cdot \left(\frac{1}{24} - \frac{1}{n^{8\xi}}\right) \geq p.
\end{equation}
Consider constants $\beta \geq 3$ and $\alpha = \frac{16}{p} \cdot (33 \xi^2)$.
We set $m = \alpha \beta \log^3 n$ and focus on the first $\frac{\alpha \beta n \log^3 n}{4}$ 
  steps of the algorithm execution.
For any fixed node, the probability that it interacts in each step is $2/n$.
Let us describe the number of times a given node interacts within the first 
  $\frac{\alpha \beta n \log^3 n}{4}$ steps
  by considering a random variable $\mathrm{Bin}(\frac{\alpha \beta n \log^3 n}{4}, 2/n)$. 
By Chernoff Bound, the probability being selected at least $m = \alpha \beta \log^3 n$ 
  times is at most
$\exp \left( -\frac{\alpha\beta}{6} \log^3{n} \right) \leq \frac{1}{n^{\alpha\beta/6}} $.
By Union Bound over all $n$ nodes, with probability at least $1-\frac{n}{n^{\alpha \beta/6}}$,
  all nodes interact strictly less than $m$ times 
  during the first $\frac{\alpha \beta n \log^3 n}{4}$ interactions.

Let us from now on focus on the above high probability event, 
  which means that all absolute values are strictly less than $m$ during the first 
  $\frac{\alpha \beta n \log^3 n}{4} = \frac{4 \beta}{p} (33 \xi^2) n \log^3 n$ interactions.
For a fixed pair of nodes, we apply~\lemmaref{lem:gap} followed by~\lemmaref{lem:rumor} 
  (with parameter $\xi$) 
  $\frac{4 \beta (33 \xi^2) n \log^3 n}{p(32 \xi^2 n \log^2 n + \xi n \log n)} \geq \frac{4 \beta \log{n}}{p}$ times.
Each time, by~\lemmaref{lem:gap}, after $32 \xi^2 n \log^2 n$ interactions 
  with probability at least $\frac{1}{24} - \frac{1}{n^{8\xi}}$ the nodes end up with values 
  at least $4 \xi \log n$ apart.
In this case, after the next $\xi n \log n$ interactions, by~\lemmaref{lem:rumor}, 
  one of the nodes becomes a minion with probability at least $1/24$. 
Since~\lemmaref{lem:rumor} is independent from the interactions that precede it,
  by~\equationref{eq:gammab}, each of the $\frac{4 \beta \log{n}}{p}$ times 
  if both nodes were contenders, with probability at least $p$ one of the nodes becomes a minion.
The probability that both nodes in a given pair are still contenders after the first 
  $\frac{\alpha \beta n \log^3 n}{4}$ steps is thus at most
  $(1-p)^{\frac{4 \beta \log{n}}{p}} \leq 2^{-4 \beta \log{n}} < \frac{1}{n^{2 \beta}}$.
By Union Bound over all $\frac{n(n-1)}{2} < n^2$ pairs,
  with probability at least $1-\frac{n^2}{n^{2\beta}}$, for every pair of nodes,
  one of them is a minion after $\frac{\alpha \beta n \log^3 n}{4}$ interactions.
Hence, with this probability, there will be only one contender.

Combining with the conditioned event that none of the nodes interact $m$ or more times
  gives that after the first $\frac{\alpha \beta n \log^3 n}{4} = O(n \log^3 n)$ interactions
  there must be a single contender with probability at least 
  $1 - \frac{n^2}{n^{2\beta}} - \frac{n}{n^{\alpha\beta/6}} \geq 1 - \frac{1}{n^{\beta}}$ 
  for $\beta \geq 3$.
A single contender means that leader is elected by~\lemmaref{lem:correct}.
\end{proof}
Finally, we prove the expected stabilization bound
\begin{theorem}
\label{thm:exp}
There is a setting of parameter $m$ of the algorithm such that $m = \Theta(\log^3 n)$,
  such that the algorithm elects the leader in expected $O(n \log^3 n)$ steps,
  i.e. in parallel time $O(\log^3 n)$.
\end{theorem}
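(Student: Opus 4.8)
The plan is to bootstrap the high-probability guarantee of \theoremref{thm:whp} into an expectation bound by pairing it with a crude \emph{universal} convergence bound that holds from every reachable configuration, including the low-probability configurations in which the first phase fails (for instance because some value has already reached the threshold $m$). Concretely, I would fix $\beta = 3$ (any sufficiently large constant works) and set $m = \alpha\beta\log^3 n = \Theta(\log^3 n)$ as in \theoremref{thm:whp}, and write $L = \frac{\alpha\beta n\log^3 n}{4} = O(n\log^3 n)$ for the length of the first phase. Letting $T$ be the election time and $R$ the time still remaining after the first $L$ rounds, we have $T \le L + R$, and since by \lemmaref{lem:correct} a single contender is an absorbing outcome, $R = 0$ whenever a leader has already been elected by round $L$. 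Conditioning on the configuration reached at round $L$ therefore gives $\EX[R] \le \Pr[\text{not yet elected at } L]\cdot P(n)$, where $P(n) \defeq \sup_c \EX[\text{rounds to elect a leader starting from } c]$. By \theoremref{thm:whp} the failure probability is at most $1/n^{\beta}$, so $\EX[T] \le L + P(n)/n^{\beta}$; if I can show $P(n)$ is finite and polynomial, say $P(n) = O(n^3)$, then with $\beta\ge 3$ the second term is $O(1)$ and $\EX[T] = O(n\log^3 n)$, i.e.\ parallel time $O(\log^3 n)$, as required.

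All the remaining work is in bounding $P(n)$, i.e.\ showing convergence in expected polynomial time from an \emph{arbitrary} configuration, in particular once values have saturated at the threshold and the protocol has entered its \emph{backup} regime. The key structural fact I would exploit is that at any meeting of two contenders the one with strictly smaller absolute value becomes a minion (by the update rule in lines~\ref{line:yst}--\ref{line:yen}, since it fails the $|y|\ge|x|$ test), so a meeting fails to reduce the number of contenders \emph{only} when the two contenders hold exactly equal values at that instant. Since surviving contenders that have reached the top oscillate between $m$ and $m+1$, with each winning interaction flipping this binary indicator (exactly the role of the two top states, as noted after \figureref{fig:lepp}), I would track, for a fixed pair of contenders, the parity of the number of interactions each has had between two of their mutual encounters: between encounters their indicators agree iff these parities agree. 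A fixed pair meets once every $\Theta(n^2)$ rounds in expectation, and the scheduler's randomness between encounters makes the relative parity within a constant factor of uniform, so each encounter eliminates one of the pair with constant probability. Thus a fixed pair resolves in $O(n^2)$ expected rounds, and the at most $n-1$ eliminations needed to reach a single contender take $O(n^3)$ expected rounds in total.

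The main obstacle I anticipate is precisely this decorrelation step: rigorously lower-bounding, by a constant, the probability that two top contenders hold \emph{different} indicators when they meet, ruling out adversarial-looking correlations in which the indicators stay locked in step across many encounters. I would handle it by modeling the number of each contender's interactions between consecutive mutual meetings as a geometric-type count (a meeting being the rare event, individual interactions the common one) and computing the parity of a geometric variable, which tends to $1/2$ and in any case stays bounded away from $0$ and $1$; the two contenders' interaction counts are disjoint scheduler events, supplying the needed independence. I note that for the expectation bound the full strength of a \emph{constant} per-encounter elimination probability is not strictly necessary: any inverse-polynomial lower bound, together with the fact that under the uniform scheduler two equal-valued contenders cannot remain value-locked with probability one, already yields a polynomial $P(n)$, and since $\beta$ may be taken to be any large constant while keeping $m = \Theta(\log^3 n)$, a polynomial $P(n)$ of any fixed degree is absorbed by the $1/n^{\beta}$ factor. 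This robustness is what makes the final combination clean.
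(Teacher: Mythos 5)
Your overall route is the same as the paper's: run the high-probability phase of \theoremref{thm:whp} for $L=O(n\log^3 n)$ rounds, and absorb the failure event by multiplying its probability $n^{-\beta}$ against a worst-case polynomial bound $P(n)$ on the expected convergence time from an \emph{arbitrary} configuration, which is exactly the decomposition $\EX[T]\le L+n^{-\beta}P(n)$ the paper uses (with $\beta=4$ and $P(n)\le 2n^4$ there). Where you differ is in how $P(n)$ is bounded. The paper's argument is deliberately crude: for any two contenders, if their values are equal, a single interaction of one of them with a third node necessarily changes its value (or eliminates it outright), and in the immediately following round the pair meets with probability at least $1/n^2$ while still unequal, so the contender count drops with probability at least $1/n^3$ per two rounds, giving $P(n)\le 2n^4$. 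Your primary argument is sharper ($P(n)=O(n^3)$ via a constant elimination probability per mutual encounter), but as stated it only covers contenders saturated at the threshold: the ``each interaction flips the $m/m{+}1$ indicator'' property, and hence the parity-of-a-geometric computation, fails for contenders with values below $m$, whose values move by partner-dependent amounts under \emph{contend-priority} and can in principle re-collide at their next meeting; so the parity argument alone does not bound $P(n)$ over all configurations. Your own fallback --- any inverse-polynomial per-step tie-breaking probability suffices, since $\beta$ may be raised while keeping $m=\Theta(\log^3 n)$ --- is precisely the paper's argument, and once it is made quantitative in the way just described the proof closes; the sharper $O(n^3)$ bound buys nothing here, because the $n^{-\beta}$ factor absorbs any fixed polynomial.
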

\begin{proof}
Let us prove that from any configuration, 
  the algorithm elects a leader in expected $O(n \log^3 n)$ steps.
By~\lemmaref{lem:correct}, there is always a contender in the system and 
  if there is only a single contender, then a leader is already elected.
Now in a configuration with at least two contenders consider any two of them.
If their values differ, then with probability at least $1/n^2$ these two contenders will 
  interact next and the one with the lower value will become a minion 
  (after which it may never be a contender again).
If the values are the same, then with probability at least $1/n$, 
  one of these nodes will interact with one of the other nodes, 
  leading to a configuration where the values of our two nodes differ\footnote{This is always true, even when the new value is not larger, for instance when the values were equal to $m+1$, the new value of one of the nodes will be $m \neq m+1$.},
  from where in the next step, independently, 
  with probability at least $1/n^2$ these nodes will interact and one of them will become a minion.
Hence, unless a leader is already elected, in every two steps, 
  with probability at least $1/n^3$ the number of contenders decreases by $1$.

Thus, the expected number of interactions until the number of contenders decreases by $1$ 
  is at most $2 n^3$.
In any configuration there can be at most $n$ contenders, 
  thus the expected number of interactions until reaching a configuration with only 
  a single contender is at most $2 (n-1) n^3 \leq 2n^4$ from any configuration. 

By~\theoremref{thm:whp} with $\beta = 4$ we get that with probability 
  at least $1-1/n^4$ the algorithm stabilizes after $O(n \log^3 n)$ interactions.
Otherwise, with probability at most $1/n^4$ it ends up in some configuration 
  from where it takes at most $2n^4$ expected interactions to elect a leader.
The total expected number of steps is therefore also $O(n \log^3 n) + O(1) = O(n \log^3 n)$, 
  i.e. parallel time $O(\log^3 n)$.  
\end{proof}
\section{Experiments and Discussion}

\begin{figure}[t]
\centering
\includegraphics[scale=0.4]{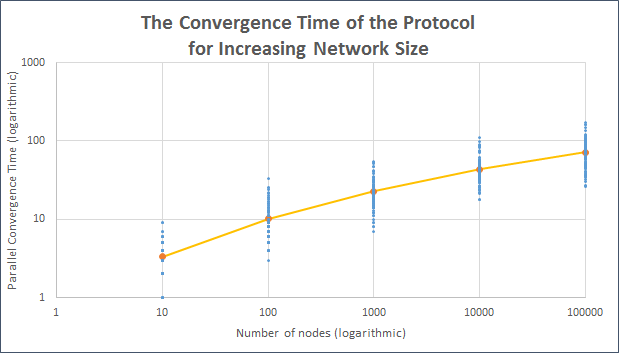}
\caption{The performance of the \emph{LM} protocol.  Both axes are logarithmic. The dots represent the results of individual experiments (100 for each network size), while the solid line represents the mean value for each  network size. 
}
\label{fig:graph}
\end{figure}

\paragraph{Empirical Data} 
We have also measured the stabilization time of our protocol for different network sizes. (Figure~\ref{fig:graph} presents the results in the form of a log-log plot.) 
The protocol stabilizes to a single leader quite fast, e.g., in less than 100 units of parallel time for a network of size $10^5$. This suggests that the constants hidden in the asymptotic analysis are small. The shape of the curve confirms the poly-logarithmic behavior of the protocol. 

\paragraph{Discussion} We have given the first population protocol to solve leader election in poly-logarithmic time, using a poly-logarithmic number of states per node. 
Together with the results of~\cite{AngluinAE2006}, the existence of our protocol implies that population protocols  can compute any semi-linear predicate on their input in time $O( n \log^5 n)$, with high probability, as long as memory per node is poly-logarithmic. 

Our result opens several avenues for future research. The first concerns \emph{lower bounds}. We conjecture that the lower bound for leader election in population protocols is $\Omega(\log n)$, irrespective of the number of states used by the protocol. Further, empirical data suggests that the analysis of our algorithm can be further tightened, cutting off logarithmic factors. It would also be interesting to prove a tight a trade-off between the amount of memory available per node and the running time of the protocol.


\bibliographystyle{alpha}
\bibliography{biblio}

\appendix

\end{document}